\newtheorem{Theorem}{Theorem}
\newtheorem{Example}{Example}
\newtheorem{Proposition}{Proposition}
\newtheorem{Lemma}{Lemma}
\newtheorem{Definition}{Definition}
\newtheorem{Remark}{Remark}
\newtheorem{Conjecture}{Conjecture}
\newcommand{\beq}{\begin{equation}}
\newcommand{\eeq}{\end{equation}}
\DeclareMathOperator{\Res}{Res}
\DeclareMathOperator{\PRes}{PRes}
\DeclareMathOperator{\CRes}{GCP}
\newcommand{\bx}{\mathbf{x}}
\newcommand{\by}{\mathbf{y}}
\newcommand{\bp}{\mathbf{p}}
\newcommand{\bq}{\mathbf{q}}
\newcommand{\ba}{\mathbf{a}}
\newcommand{\KK}{\mathbb{K}}
\newcommand{\PP}{\mathbb{P}}
\newcommand{\AAA}{\mathbb{A}}
\newcommand{\Ra}[1]{#1}
\newcommand{\Rb}[1]{#1}
\journal{}
\begin{document}

\begin{frontmatter}



\title{Persistent components in\\ \Ra{Canny's Generalized Characteristic Polynomial}\\ \emph{\normalsize In memory of Agnes Szanto}}


\author[1]{Gleb Pogudin}
\ead{gleb.pogudin@polytechnique.edu}

\affiliation[1]{organization={LIX, CNRS, Ecole polytechnique, Institute Polytechnique de Paris},
    city={Paris},
    country={France}}

\begin{abstract}
  When using resultants for elimination, one standard issue is that the resultant vanishes if the variety contains components of dimension larger than the expected dimension.
  J. Canny proposed an elegant construction, generalized characteristic polynomial, to address this issue by symbolically perturbing the system before the resultant computation.
  Such perturbed resultant would typically involve artefact components only loosely related to the geometry of the variety of interest.
  For removing these components, J.M. Rojas proposed to take the greatest common divisor of the results of two different perturbations.
  In this paper, we investigate this construction, and show that the extra components persistent under taking different perturbations must come either from singularities or from positive-dimensional fibers.
\end{abstract}

\begin{keyword}
resultant \sep generalized characteristic polynomial \sep syzygies



\end{keyword}

\end{frontmatter}


\section{Introduction}

Resultants is a classical tool for performing elimination of variables in polynomial systems.
While they are often very successful in practice, their applicability may be hindered by so-called excess components.
For example, assume that we would like to find isolated zeroes of the following two-dimensional system:
\begin{equation}\label{eq:intro_ex}
f_1 := x^2 - y^2 + x - y = 0, \quad f_2 := 2 x^2 - xy - y^2 = 0.
\end{equation}
One natural approach would be to compute the resultant of $f_1$ and $f_2$ with respect to one of the variables, say $x$, in order to obtain a univariate polynomial vanishing at the $y$-coordinates of the roots of the system.
However, both the resultant $\Res_x(f_1, f_2)$ with respect to $x$ and the resultant $\Res_y(f_1, f_2)$ with respect to $y$ vanish identically and, thus, do not provide useful information about the isolated roots.
This is because the variety defined by $f_1 = f_2 = 0$ has a one-dimensional component $x - y = 0$ as can be shown by factoring $f_1$ and $f_2$ as follows:
\[
f_1 = (x - y)(x + y + 1), \quad f_2 = (x - y)(2x + y).
\]
\cite{Canny90} proposed an elegant way to circumvent this limitation. 
The idea is to ``symbolically perturb'' the system by introducing a new variable $\varepsilon$ and then considering
\begin{equation}\label{eq:preturb}
\Tilde{f}_1 := f_1 + \varepsilon x^2, \quad \Tilde{f}_2 := f_2 + \varepsilon.
\end{equation}
Now the resultant is nonzero
\begin{equation}\label{eq:res1}
\Res_x(\Tilde{f}_1, \Tilde{f}_2) = \underbrace{(-3 y^4 - 6 y^3 + 2 y^2 + 5 y + 2)}_{=: \widetilde{R}(y) } \varepsilon + (y^4 + 5 y + 1) \varepsilon^2 + \mathcal{O}(\varepsilon^3)
\end{equation}
and its lowest-order coefficient with respect to $\varepsilon$, $\widetilde{R}(y) := -3 y^4 - 6 y^3 + 2 y^2 + 5 y + 2$, is guaranteed~\citep[Theorem~3.2]{Canny90} to vanish at the $y$-coordinate of every isolated root of the system $f_1 = f_2 = 0$.
In this case, the only isolated root is $(1, -2)$ and, indeed, $y = -2$ is a root of $\widetilde{R}(y)$.
This construction has been used, for example, for computing triangular decompositions of algebraic varieties by \cite{SzantoPaper, SzantoThesis}  and for polynomial system solving by~\cite{Rojas1999}.

This beautiful construction has one potential inconvenience that $y = -2$ is not the \emph{only} root of $\widetilde{R}(y)$, there are three more roots $y = 1$ and $y = -\frac{1}{2} \pm \frac{\sqrt{3}i}{6}$.
One can think of these roots as the remainders of the one-dimensional component, see~\citep[Lemma 5.4]{Rojas1999} for a more precise statement.
In many contexts, though not always~\citep{Rojas1999}, these roots are of no interest, so it is desirable to eliminate them or maybe even not compute at the first place, and this is the question investigated in this paper.

More precisely, we will study the  \emph{double perturbation} approach proposed in~\cite[Section~5.7]{Rojas1999}.
The starting point is to observe~\citep[Section~2.2]{Rojas1999} that the original Canny's argument from~\citep{Canny90} applies to a broad general class of perturbations, not only to the one given in~\eqref{eq:preturb}.
Therefore, one can take \emph{two different generic perturbations} and compute the GCD of the corresponding resultants.
For our running example, we could take, in addition to~\eqref{eq:preturb},
\begin{equation}\label{eq:perturbed_hat}
  \hat{f}_1 := f_1 + \varepsilon (x + 1)^2, \quad \hat{f}_2 := f_2 + \varepsilon (x - 1)^2
\end{equation}
and obtain
\begin{equation}\label{eq:res_hat}
\Res_{\Rb{x}}(\hat{f}_1, \hat{f}_2) = \underbrace{(-y^4 - 11 y^3 - 21 y^2 - 5 y + 2)}_{=:\widehat{R}(\Rb{y})}\varepsilon + \mathcal{O}(\varepsilon^2).
\end{equation}
Then their greatest common divisor, $\operatorname{gcd}(\widetilde{R}(y),\; \widehat{R}(y))$, is equal to $y + 2$ and, thus, vanishes exactly at the $y$-coordinates of the isolated roots of the system, as desired.
\cite{Rojas1999} conjectured that ``we should be able to pick out these isolated roots simply by computing the gcd''.
As examples in Section~\ref{sec:examples} show, the situation is more subtle: ``persistent'' roots of Canny's resultant may also come from points with positive-dimensional fibers (Example~\ref{ex:positive}) or singularities (Examples~\ref{ex:embedded} and~\ref{ex:twisted}). 

The main result of this paper (Theorem~\ref{thm:main}) shows that these two potential sources of additional ``persistent'' components of Canny's resultant are the only ones (\Ra{thus excluding, for example, the critical points of the projection map}).
We also fully characterize such ``persistent'' components for the planar case (Proposition~\ref{prop:planar}).
The main ingredient in the proof is a connection we establish between ``persistent'' components and zero sets of the syzygy module of $f_i$'s (see Section~\ref{sec:proofs}).

The rest of the paper is organized as follows. 
We give precise definitions and state the problem formally in Section~\ref{sec:preliminaries}.
We state our main results in Section~\ref{sec:results} and illustrate them with examples in Section~\ref{sec:examples}.
The proofs are contained in Section~\ref{sec:proofs}.

\paragraph{Acknowledgements} I have learned about Canny's resultant (among many other things) from Agnes Szanto when we were working on our paper~\citep{our}, and without her encouragement none of my ideas about them would have ever been written down.

I would also like to thank Carlos D'Andrea and J. Maurice Rojas for very helpful discussions and the organizers of the FOCM'23 session dedicated to Agnes where these discussions took place. 
I would like to thank the referees for careful reading and numerous helpful comments.
This work has partly been supported by the French ANR-22-CE48-0008 OCCAM project.

\section{Formal setup and problem statement}\label{sec:preliminaries}

Throughout the paper, all fields are assumed to be of characteristic zero.

\begin{Definition}[Resultant]
    Let $f_1, \ldots, f_n \in \mathbb{K}[\bx]$ be $n$ homogeneous polynomials in $\bx = (x_1, \ldots, x_n)$.
    By $\Res_{\bx}(f_1, \ldots, f_n)$ we will denote the \emph{resultant} of $f_1, \ldots, f_n$ with respect to $\bx$, that is, the unique (up to a constant factor) irreducible polynomial in the coefficients of $f_1, \ldots, f_n$ with the property
    \[
      \Res_{\bx}(f_1, \ldots, f_n) = 0 \quad \iff \quad \exists \bx^\ast \in \overline{\mathbb{K}}^n\colon \bx^\ast \neq \mathbf{0} \;\&\; f_1(\bx^\ast) = \ldots = f_n(\bx^\ast) = 0,
    \]
    where $\overline{\mathbb{K}}$ denotes the algebraic closure of $\mathbb{K}$.
    For the existence and basic properties of the resultant, we refer to~\cite[Chapter~3, \S2]{CLO}.
\end{Definition}

We will typically compute resultants for polynomials $f_1, \ldots, f_n \in \mathbb{K}[\bx, \by]$ homogeneous with respect to $\bx = (x_1, \ldots, x_n)$ but not necessarily with respect to $\by = (y_1, \ldots, y_m)$ (in other words, $f_1, \ldots, f_n$ are regular functions on $\mathbb{P}^{n - 1} \times \mathbb{A}^m$).
In this case, $\Res_{\bx}(f_1, \ldots, f_n)$ is a polynomial in $\by$ defining the image of the projection of $V(f_1, \ldots, f_n) \subset \mathbb{P}^{n - 1} \times \mathbb{A}^m$ to $\mathbb{A}^m$.

Now we will formally define the ``symbolic perturbaion'' of the resultant we described informally in the introduction.
This particular construction extends the generalized characteristic polynomial introduced by~\cite{Canny90} and can be viewed as a specialization of the toric generalized characteristic polynomial introduced by~\cite{Rojas1999}.

\begin{Definition}[Generalized characteristic polynomial]\label{def:canny}
    Consider polynomials $f_1, \ldots, f_n \in \mathbb{K}[\bx, \by]$ homogeneous with respect to $\bx = (x_1, \ldots, x_n)$.
    We define $d_i := \deg_{\bx} f_i$ for every $1 \leqslant i \leqslant n$ and introduce an extra variable $\varepsilon$.
    Let $\bp \in (\mathbb{K}[\bx])^n$ be a vector \Ra{of} homogeneous polynomials with $\deg_{\bx} p_i = d_i$ for $1 \leqslant i \leqslant n$ such that the only solution of the system $\bp = \mathbf{0}$ over $\overline{\mathbb{K}}$ is $\mathbf{0}$ (we will call such vector \emph{admissible}).
    We set $\widetilde{f}_i := f_i + \varepsilon p_i$ for $1 \leqslant i \leqslant n$.
    
    Then there exist~\cite[Theorem~2.4]{Rojas1999} a nonnegative integer $s$ and nonzero polynomial $R_s(\by) \in \mathbb{K}[\by]$ such that
    \[
    \Res_\bx(\widetilde{f}_1, \ldots, \widetilde{f}_n) = R_s(\mathbf{y}) \varepsilon^s + \mathcal{O}(\varepsilon^{s + 1}).
    \]
    We will call $R_s(\by)$ the \emph{generalized characteristic polynomial} and denote it by $\CRes_{\bx, \bp}(f_1, \ldots, f_n)$.
\end{Definition}

Let $X := V_{\overline{\mathbb{K}}}(f_1, \ldots, f_n)$ be the zero set of $f_1 = \ldots = f_n = 0$ over $\overline{\KK}$ and consider the projection $\pi \colon \mathbb{P}^{n - 1}_{\overline{\mathbb{K}}} \times \mathbb{A}^m_{\overline{\mathbb{K}}} \to \mathbb{A}^m_{\overline{\mathbb{K}}}$.
A component $C$ of $X$ of the lowest possible dimension $m - 1$ will be called \emph{proper}, other components will be called \emph{excess} components.
The key property of the generalized characteristic polynomial (see~\citep[Theorem~3.2]{Canny90} and~\citep[Theorem~2.4]{Rojas1999}) is the following.
\begin{Proposition}\label{prop:key_prop}
    In the notation above, let $C \subset X$ be a proper component.
    Then $\CRes_{\bx, \bp}(f_1, \ldots, f_n)$ vanishes on $\pi(C)$ for every admissible $\bp$.
\end{Proposition}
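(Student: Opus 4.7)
The plan is to reduce Proposition~\ref{prop:key_prop} to a dimension count on the incidence variety of the perturbed system, and then to convert a formal arc inside that variety into vanishing of $R_s$ at a generic point of $\pi(C)$. As setup, I consider the variety $\widetilde{X} := V(\widetilde{f}_1, \ldots, \widetilde{f}_n) \subset \PP^{n-1} \times \AAA^m \times \AAA^1$ with coordinates $(\bx, \by, \varepsilon)$, and the projection $\pi'$ of this ambient product onto $\AAA^m \times \AAA^1$. By the defining property of the resultant, $R(\by, \varepsilon) := \Res_\bx(\widetilde{f}_1, \ldots, \widetilde{f}_n)$ vanishes on $\pi'(\widetilde{X})$. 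Fix a Zariski-generic point $(\bx^\ast, \by^\ast) \in C$, so that $(\bx^\ast, \by^\ast)$ lies on no irreducible component of $X$ other than $C$; note $(\bx^\ast, \by^\ast, 0) \in \widetilde{X}$, since $\widetilde{f}_i|_{\varepsilon = 0} = f_i$.

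The geometric core is the claim that every irreducible component $\widetilde{C}$ of $\widetilde{X}$ through $(\bx^\ast, \by^\ast, 0)$ satisfies $\widetilde{C} \not\subset \{\varepsilon = 0\}$. On one hand, Krull's Hauptidealsatz applied to the $n$ equations $\widetilde{f}_i$ in the $(n+m)$-dimensional ambient product forces $\dim \widetilde{C} \geq m$. On the other hand, if $\widetilde{C} \subset \{\varepsilon = 0\}$, then $\widetilde{C} \subset \widetilde{X} \cap \{\varepsilon = 0\} = X \times \{0\}$, and by irreducibility $\widetilde{C} \subset C' \times \{0\}$ for some irreducible component $C'$ of $X$. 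Since $(\bx^\ast, \by^\ast) \in C'$, genericity forces $C' = C$, and then $\dim \widetilde{C} \leq \dim C = m - 1$ contradicts the Krull bound. This is the step where the hypothesis ``$C$ is proper'' enters essentially.

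Picking such a $\widetilde{C}$, I extract a formal arc $\gamma(t) = (\bx(t), \by(t), \varepsilon(t))$ with values in $\widetilde{C}$, $\gamma(0) = (\bx^\ast, \by^\ast, 0)$, and $\varepsilon(t) \not\equiv 0$; this is a standard consequence of the irreducibility of $\widetilde{C}$ together with the existence of a point of $\widetilde{C}$ outside $\{\varepsilon = 0\}$ (e.g., via curve selection or a generic hyperplane section through the point). Substituting $\gamma$ into $R(\by, \varepsilon) = R_s(\by)\varepsilon^s + \mathcal{O}(\varepsilon^{s+1})$ and using $R(\by(t), \varepsilon(t)) \equiv 0$, I factor out $\varepsilon(t)^s$ and evaluate at $t = 0$ to obtain $R_s(\by^\ast) = 0$. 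Running this argument over a Zariski-dense subset of generic points of $C$, whose $\by$-projection is dense in $\pi(C)$, shows that $R_s = \CRes_{\bx, \bp}(f_1, \ldots, f_n)$ vanishes on a dense subset of $\pi(C)$ and therefore on $\pi(C)$ itself.

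I expect the dimension argument in the middle paragraph to be the main obstacle: properness of $C$ must be used at precisely the right point, and some care is needed to ensure the admissibility of $\bp$ does not introduce exotic components of $\widetilde{X}$ (for instance at infinity in $\varepsilon$) that could break the Krull-based contradiction. The subsequent arc-and-substitute step is formal and should proceed without difficulty.
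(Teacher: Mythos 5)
The paper does not actually prove Proposition~\ref{prop:key_prop}: it cites it from \citep[Theorem~3.2]{Canny90} and \citep[Theorem~2.4]{Rojas1999}. So there is no in-paper proof to compare against; the relevant comparison is to those references and to the standard of correctness.

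Your argument is correct, and it is essentially the classical Canny argument recast cleanly in the language of the incidence variety in $\PP^{n-1}\times\AAA^m\times\AAA^1$ and formal arcs. The dimension step is sound: every component of $\widetilde X = V(\widetilde f_1,\dots,\widetilde f_n)$ through $(\bx^\ast,\by^\ast,0)$ has dimension $\geqslant m$ by Krull, and any component contained in $\{\varepsilon=0\}$ would sit inside $C'\times\{0\}$ for some component $C'$ of $X$; the genericity of $(\bx^\ast,\by^\ast)$ on $C$ then forces $C'=C$, which has dimension $m-1$, giving the contradiction. (This is exactly where properness is used, as you flag.) The arc step is also fine: with $\varepsilon(t)=ct^v+\cdots$, $v\geqslant 1$, $c\neq 0$, the coefficient of $t^{sv}$ in $R(\by(t),\varepsilon(t))$ is $R_s(\by^\ast)c^s$, so $R_s(\by^\ast)=0$; ranging over a dense set of generic $(\bx^\ast,\by^\ast)\in C$ gives vanishing on $\pi(C)$. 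Your closing worries are not real issues: $\varepsilon$ lives on an affine line so there is no ``infinity in $\varepsilon$'' to cause trouble, and $\pi'$ is a closed map because $\PP^{n-1}$ is complete, so $\pi'(\widetilde X)\subset V(R)$ is a genuine closed inclusion. The only thing worth noting explicitly is where admissibility of $\bp$ enters: it is needed to guarantee that $R(\by,\varepsilon)=\Res_\bx(\widetilde f_1,\dots,\widetilde f_n)$ is not identically zero (its top $\varepsilon$-coefficient is $\Res_\bx(p_1,\dots,p_n)\neq 0$), so that $R_s=\CRes_{\bx,\bp}$ is even well-defined; your proof then never needs admissibility again, which is why it did not appear in the body of the argument.
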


\begin{Example}
    We will revisit the system~\eqref{eq:intro_ex}. 
    Replacing $f_1$ and $f_2$ with their homogenizations {\Ra{with respect to}} $x$, we obtain:
    \[
    f_1 := x_1^2 - x_2^2 y^2 + x_1 x_2 - x_2^2y, \quad f_2 := 2 x_1^2 - x_1x_2y - x_2^2y^2.
    \]
    Taking $\bp = (x_1^2, x_2^2)$, we will obtain $\widetilde{f}_1 = f_1 + \varepsilon x_1^2,\; \widetilde{f}_2 = f_2 + \varepsilon x_2^2$ which are exactly the $\widetilde{f}_1$ and $\widetilde{f}_2$ from~\eqref{eq:preturb} after homogenization.
    Therefore, by~\eqref{eq:res1}, we have 
    \[
    \CRes_{\bx, \bp}(f_1, f_2) = -3 y^4 - 6 y^3 + 2 y^2 + 5 y + 2.
    \]
    Likewise, if we consider $\bp = ((x_ 1 + x_2)^2, (x_1 - x_2)^2)$, then we will obtain the homogenizations of $\hat{f}_1$ and $\hat{f}_2$ from~\eqref{eq:perturbed_hat} and, thus, obtain
    \[
    \CRes_{\bx, \bp}(f_1, f_2) = -y^4 - 11 y^3 - 21 y^2 - 5 y + 2.
    \]
\end{Example}

The following lemma formalizes the idea of taking the GCD of two generic GCP's.

\begin{Lemma}\label{lem:double_pert}
    Consider polynomials $f_1, \ldots, f_n \in \mathbb{K}[\bx, \by]$ homogeneous with respect to $\bx = (x_1, \ldots, x_n)$, and denote $d_i = \deg_{\bx}f_i$ for $1 \leqslant i \leqslant n$.
    Let $\Ra{\mathcal{P}}$ be the space of all vectors $\bp \in (\mathbb{K}[\bx])^n$ of homogeneous polynomials with $\deg p_i = d_i$.
    There exists a nonempty Zariski open subset $U \subset \Ra{\mathcal{P}}\times \Ra{\mathcal{P}}$ and a polynomial $R(\by) \in \mathbb{K}[\by]$ such that, for every $(\bp, \bq) \in U$, we have:
    \[
    \operatorname{GCD}(\CRes_{\bx, \bp}(f_1, \ldots, f_n), \; \CRes_{\bx, \bq}(f_1, \ldots, f_n)) = R(\by).
    \]
\end{Lemma}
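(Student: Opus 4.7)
The plan is to identify $R(\by)$ as the ``persistent'' part of the GCP that one extracts by factoring $\CRes_{\bx,\bp}$, viewed as a polynomial in $\by$ and in the coefficients of $\bp$, and collecting the factors that do not involve $\bp$. To begin, I would write the coefficient of $\varepsilon^k$ in $\Res_{\bx}(f_1+\varepsilon p_1,\ldots,f_n+\varepsilon p_n)$ as a polynomial $R_k(\by,\bp)\in\mathbb{K}[V][\by]$ and let $s_0$ be the smallest $k$ with $R_k\not\equiv 0$. On the nonempty Zariski open $V_0\subset V$ consisting of admissible $\bp$ for which $R_{s_0}(\by,\bp)\not\equiv 0$ in $\by$, the $\varepsilon$-order is uniformly $s_0$ and $\CRes_{\bx,\bp}(f_1,\ldots,f_n)=R_{s_0}(\by,\bp)$. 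Gauss's lemma applied to the UFD $\mathbb{K}[V][\by]$ then lets me factor $R_{s_0}=c(\bp)\,R(\by)\,Q(\by,\bp)$ uniquely (up to $\mathbb{K}^\times$), where $c(\bp)\in\mathbb{K}[V]$ is the $\bp$-content, $R(\by)\in\mathbb{K}[\by]$ gathers with multiplicity the primitive irreducible factors that lie in $\mathbb{K}[\by]$, and $Q\in\mathbb{K}[V][\by]$ is primitive with no irreducible factor associate to an element of $\mathbb{K}[\by]$. This $R(\by)$ is the candidate for the GCD in the lemma.

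The crux is then to show, introducing a second copy $V'$ of $V$ with coordinates $\bq$, that $\gcd(Q(\by,\bp),Q(\by,\bq))=1$ in $\mathbb{K}(V\times V')[\by]$. Because $\mathbb{K}(V\times V')/\mathbb{K}(V)$ is purely transcendental (generated by the coordinates of $V'$), every irreducible factor of $Q(\by,\bp)$ in $\mathbb{K}(V)[\by]$ remains irreducible in $\mathbb{K}(V\times V')[\by]$, and symmetrically for $Q(\by,\bq)$ and $\mathbb{K}(V')[\by]$. A common irreducible factor would then yield an associate pair $p(\by,\bp)=\lambda\,q(\by,\bq)$ with $\lambda\in\mathbb{K}(V\times V')^\times$; comparing ratios of two nonzero $\by$-coefficients on both sides produces an element of $\mathbb{K}(V)\cap\mathbb{K}(V')=\mathbb{K}$, which forces $p$ to be a $\mathbb{K}(V)^\times$-scalar multiple of an element of $\mathbb{K}[\by]$, contradicting the construction of $Q$.

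From coprimality over $\mathbb{K}(V\times V')[\by]$ a Bezout identity $A\,Q(\by,\bp)+B\,Q(\by,\bq)=1$ exists; clearing denominators produces $h\in\mathbb{K}[V\times V']\setminus\{0\}$ such that $\gcd(Q(\by,\bp_0),Q(\by,\bq_0))=1$ whenever $h(\bp_0,\bq_0)\neq 0$. Defining $U$ as the intersection of $V_0\times V_0$, $\{h\neq 0\}$, and $\{c(\bp)c(\bq)\neq 0\}$ then gives the open set of the statement, on which $\gcd(\CRes_{\bx,\bp},\CRes_{\bx,\bq})=R(\by)$ up to a scalar in $\mathbb{K}^\times$. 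The hard part is the coprimality claim above: ruling out accidental common factors between two independent specializations is precisely where the linearly disjoint / purely transcendental structure of the two parameter copies over $\mathbb{K}$ is essential, and this is where the argument would need to be written most carefully. Everything else is Gauss-lemma bookkeeping or a routine specialization of a Bezout identity.
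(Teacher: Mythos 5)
Your proposal is correct and takes essentially the same approach as the paper: compute the lowest-order $\varepsilon$-term with symbolic perturbation coefficients, isolate the part lying in $\KK[\by]$, and use algebraic independence of the two copies of parameters to show the remaining factors are coprime on a nonempty open locus. Your Gauss-lemma factorization and Bezout-identity argument simply spell out the paper's brief assertion that $R_0 := \operatorname{GCD}(R_1(\ba_1,\by),\,R_1(\ba_2,\by))$ lies in $\KK[\by]$ and that the degree-matching conditions cut out a nonempty open set (you should also intersect $U$ with the admissibility locus so that both $\CRes$'s are defined).
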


\begin{proof}
    We introduce two sets of indeterminates $\ba_1$ and $\ba_2$, of cardinality $\dim \Ra{\mathcal{P}}$ each.
    We choose a basis in $\Ra{\mathcal{P}}$, and let $\bp_1$ and $\bp_2$ be vectors of homogeneous polynomials in $(\KK(\ba_1, \ba_2)[\bx])^n$ with $\deg_\bx p_{1, i} = \deg_{\bx} p_{2, i} = d_i$ for every $1 \leqslant i \leqslant n$ with the coordinates in the chosen bases being $\ba_1$ and $\ba_2$, respectively.
    Let $R_1(\ba_1, \by) \in \KK[\ba_1, \by]$ the the lowest $\varepsilon$-degree term in
    $\Res_{\bx}(f_1 + \varepsilon p_{1, 1}, \ldots, f_n + \varepsilon p_{1, n})$.
    Then $R_1(\ba_2, \by) \in \KK[\ba_2, \by]$ is the analogous term in $\Res_{\bx}(f_1 + \varepsilon p_{2, 1}, \ldots, f_n + \varepsilon p_{2, n})$.
    Let $R_0 = \operatorname{GCD}(R(\ba_1, \by), R(\ba_2, \by))$.
    Then $R_0 \in \KK[\by]$, so, 
    for every specialization of $\ba_1$ and $\ba_2$ in $\KK$ such that none of $R_1(\ba_1, \by)$ and $R_1(\ba_2, \by)$ becomes identically zero, the GCD of the specializations is divisible by $R_0$.
    Consider the points $(\ba_1^\ast, \ba_2^\ast) \in \Ra{\mathcal{P}} \times \Ra{\mathcal{P}}$ such that 
    \[
    R_1(\ba_1^\ast, \by)R_1(\ba_2^\ast, \by) \neq 0 \quad \text{ and } \deg_{\by} \operatorname{GCD}(R_1(\ba_1^\ast, \by),\; R_1(\ba_2^\ast, \by)) = \deg_{\by} R_0.
    \]
    Since these conditions can be described by inequations (for the latter, e.g., by setting the intermediate results in the Euclidean algorithm to be nonzero), one can verify that this set contains an open nonempty subset $U \subset \Ra{\mathcal{P}}\times \Ra{\mathcal{P}}$.
    Therefore, we can take $R(\by) = R_0(\by)$ and this $U$.
\end{proof}

Lemma~\ref{lem:double_pert} motivates the following main definition of this paper (which can be viewed as a specialization of the \emph{double toric perturbation} introduced by~\cite{Rojas1999}).

\begin{Definition}[Perturbed resultant]
    In the notation of Lemma~\ref{lem:double_pert}, the polynomial $R(\by)$ will be called \emph{the perturbed resultant} of $f_1, \ldots, f_n$ with respect to $\bx$ and denoted by $\PRes_{\bx}(f_1, \ldots, f_n)$.
\end{Definition}

\begin{Remark}[Computing perturbed resultant]\label{rem:computation}
    Based on the proof of Lemma~\ref{lem:double_pert}, one way to compute $\PRes$ is to compute $\CRes$ with $\bp$ being a vector of polynomials with undetermined coefficients, and then take the factors with coefficients in the ground field $\KK$.
    We used this approach for the examples in Section~\ref{sec:examples}.
    In practice, one can take two random $\bp$'s and compute the GCD as in Lemma~\ref{lem:double_pert}.
    Probability analysis of such an approach is beyond the scope of the present paper.
\end{Remark}

Proposition~\ref{prop:key_prop} implies that the perturbed resultant vanishes on the projection of every proper component whose projection has the same dimension.
One may expect that the zero set of the perturbed resultant will be exactly the union of projections of proper components (see~\cite[Section~5.7]{Rojas1999}).
The examples from Section~\ref{sec:examples} show that the situation is more subtle.
Therefore, the \textbf{main question} studied in this paper is:

\begin{quote}
    \emph{What can be said about the zero set of the perturbed resultant in terms of the zero set of the original system?}
\end{quote}

\section{Main results}\label{sec:results}

The following theorem says that the perturbed resultant may vanish at a point $P$ only in three cases:
\begin{itemize}
    \item $P$ belongs to a projection of a proper component;
    \item one of the preimages of $P$ is a singularity (see Examples~\ref{ex:embedded} and~\ref{ex:twisted});
    \item the preimage of $P$ is infinite (see Examples~\ref{ex:positive}~\Ra{and~\ref{ex:positive2}}).
\end{itemize}

\begin{Theorem}\label{thm:main}
    Consider polynomials $f_1, \ldots, f_n \in \mathbb{K}[\bx, \by]$  in variables $\bx = (x_1, \ldots, x_n)$ and $\by = (y_1, \ldots, y_m)$ homogeneous with respect to $\bx$ such that $\deg_{\bx}f_i > 0$ for every $1 \leqslant i \leqslant n$.
    Let $X$ be the scheme defined by $f_1, \ldots, f_n$ in $\PP^{n - 1}_{\overline{\KK}} \times \AAA^m_{\overline{\KK}}$, and we define $\pi \colon \PP^{n - 1}_{\overline{\KK}} \times \AAA^m_{\overline{\KK}} \to \AAA^m_{\overline{\KK}}$, the canonical projection.

    \Rb{Consider a point $P \in \AAA^m_{\overline{\KK}}$ such that
    \begin{enumerate}
        \item $\pi^{-1}(P) \cap X$ is finite and
        \item for every point $Q \in \pi^{-1}(P) \cap X$, $Q$ is a smooth point of a component of $X$ of dimension~$\geqslant m$.
    \end{enumerate}
    }
    Then $\PRes_{\bx}(f_1, \ldots, f_n)$ does not vanish at $P$.
\end{Theorem}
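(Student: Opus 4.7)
The plan is to argue constructively: under the hypothesis on $P$, I will exhibit a single admissible perturbation $\bp$ for which $\CRes_{\bx, \bp}(f_1, \ldots, f_n)(P) \neq 0$. Since by Lemma~\ref{lem:double_pert} the polynomial $\PRes_{\bx}(f_1, \ldots, f_n)$ divides $\CRes_{\bx, \bp}(f_1, \ldots, f_n)$ for every $\bp$ in a Zariski-open subset of the perturbation space $V$, producing one such $\bp$ inside a Zariski-open set suffices to conclude $\PRes_{\bx}(f_1, \ldots, f_n)(P) \neq 0$. Setting $R(\by, \varepsilon) := \Res_\bx(\widetilde{f}_1, \ldots, \widetilde{f}_n)$ and letting $s$ be its generic $\varepsilon$-order (so that $\CRes_{\bx, \bp}$ is the coefficient of $\varepsilon^s$), the condition $\CRes_{\bx, \bp}(P) \neq 0$ is equivalent to $\operatorname{ord}_{\varepsilon=0} R(P, \varepsilon) = s$. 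Viewing $R$ as the defining equation of the image divisor under the projection $\pi \colon \widetilde{X} \to \AAA^m \times \AAA^1_\varepsilon$, where $\widetilde{X} := V(\widetilde{f}_1, \ldots, \widetilde{f}_n)$, the divisor $\{\varepsilon = 0\}$ enters $V(R)$ with multiplicity exactly $s$; the excess $\operatorname{ord}_{\varepsilon=0} R(P, \varepsilon) - s$ is then the sum of intersection multiplicities at $(P, 0)$ between the line $\{\by = P\}$ and the remaining (``horizontal'') components of $V(R)$, and such a contribution can arise only if $\widetilde{X}$ has a local branch at some $(Q_j, P, 0)$ whose tangent lies outside $\{\varepsilon = 0\}$.

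The hypothesis is used to pin down the local structure of $X$ at each fiber point $Q_j$. Together with the fiber dimension theorem, finiteness of the fiber together with smoothness on a dimension-$\geq m$ component forces each $Q_j \in \pi^{-1}(P) \cap X$ to lie on a unique component $C_j$ of $X$ of dimension exactly $m$, with $\pi|_{C_j}$ dominant and $X$ coinciding with $C_j$ as a germ at $Q_j$; otherwise either the fiber at $P$ would be positive-dimensional (contradicting finiteness) or $P$ would lie in the projection of a proper component, forcing $\PRes(P) = 0$ by Proposition~\ref{prop:key_prop} and thus rendering the conclusion vacuous, so we may exclude this case. Smoothness of $Q_j$ on $C_j$ then gives $\operatorname{rank} J(Q_j) = n - 1$, where $J(Q_j)$ denotes the Jacobian of $(f_1, \ldots, f_n)$ at $Q_j$, so the left-kernel of $J(Q_j)$ is one-dimensional and spanned by a nonzero syzygy vector $c^{(j)} \in \overline{\KK}^n$. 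The extended Jacobian of $(\widetilde{f}_1, \ldots, \widetilde{f}_n)$ at $(Q_j, P, 0)$ equals $\bigl( J(Q_j) \mid \bp(Q_j) \bigr)$, and it has full row rank $n$ precisely when $\sum_i c_i^{(j)} p_i(Q_j) \neq 0$. When this holds, $\widetilde{X}$ is smooth of dimension $m$ at $(Q_j, P, 0)$ with tangent space contained in $\{\varepsilon = 0\}$, and by dimension comparison coincides locally with $C_j \times \{0\}$; hence no horizontal branch of $\widetilde{X}$ emerges from $(Q_j, P, 0)$.

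For each $j$, the linear form $\bp \mapsto \sum_i c_i^{(j)} p_i(Q_j)$ on $V$ is nontrivial (using $c^{(j)} \neq 0$ combined with the surjectivity of the evaluation map at the projective point $Q_j$ on each degree-$d_i$ summand), so the simultaneous non-vanishing of these finitely many linear forms together with admissibility cuts out a nonempty Zariski-open subset of $V$; any $\bp$ chosen there yields $\CRes_{\bx, \bp}(P) \neq 0$. The main obstacle is the first-step identification of $\operatorname{ord}_{\varepsilon=0} R(P, \varepsilon) - s$ with precisely the tangent-based horizontal contribution described above: rigorously, this rests on a flatness argument for $\widetilde{X} \to \AAA^1_\varepsilon$ near the finite fiber over $(P, 0)$, or alternatively on a corank analysis of the Macaulay-type matrix $M(\by, \varepsilon)$ representing the resultant, in which the ``jump'' in corank of $M$ at $(P, 0)$ relative to its generic corank is controlled exactly by the syzygy vectors $c^{(j)}$ --- this is what realizes the connection to the syzygy module of $(f_1, \ldots, f_n)$ mentioned in the introduction.
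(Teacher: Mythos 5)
Your argument is correct in outline and arrives at the same non-degeneracy condition as the paper, but it reaches it by a genuinely different route, and the ``main obstacle'' you flag is in fact closeable and corresponds precisely to the paper's Lemma~\ref{lem:Xp}.

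The paper's path is: Lemma~\ref{lem:Xp} shows $V(\CRes_{\bx,\bp}) = \pi(X_{\bp})$ by a saturation/elimination argument; Lemma~\ref{lem:syzygy} then bounds $X_{\bp}$ by the zero set of the syzygy combinations $S_{\bp}$; and Proposition~\ref{prop:smooth} produces, at each smooth fiber point $Q$, a \emph{polynomial} syzygy $(b_1,\ldots,b_n)$ with $b_n(Q)\neq 0$, using the local-complete-intersection property of a regular local quotient (the Kunz reference). You instead work with the scalar left-kernel vector $c^{(j)}$ of the Jacobian $J(Q_j)$ and argue directly that, when $c^{(j)}\cdot\bp(Q_j)\neq 0$, the extended Jacobian has full rank $n$, so $\widetilde{X}$ is smooth of dimension $m$ at $(Q_j,P,0)$ and therefore coincides locally with $C_j\times\{0\}\subset\{\varepsilon=0\}$, ruling out any horizontal component through that point. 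This is cleaner in that it avoids the syzygy module and the regular-local-ring lemma entirely; the two non-degeneracy conditions are equivalent because evaluating any polynomial syzygy at a point of $V(f_1,\ldots,f_n)$ lands in the left-kernel of $J$, and that kernel is one-dimensional here. Note, however, that the paper's Lemma~\ref{lem:syzygy} (and Kunz) buys something you would not get from the Jacobian alone if $Q_j$ were merely a smooth point of $X$ without being cut out transversally; in the present setting this makes no difference, but it explains why the paper carries syzygies rather than just tangent spaces.

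Two smaller points. First, your phrase that the excess order ``can arise only if $\widetilde X$ has a local branch at some $(Q_j,P,0)$ whose tangent lies outside $\{\varepsilon=0\}$'' overstates what is true---a horizontal component can be tangent to $\{\varepsilon=0\}$ at a point---but your second paragraph actually proves the stronger fact that $\widetilde X$ is locally contained in $\{\varepsilon=0\}$, which rules out a horizontal branch regardless of its tangent, so this misstatement is not load-bearing. Second, the step you single out as the ``main obstacle'' needs no flatness or Macaulay-matrix corank analysis: you only need the inclusion $V(R_0)\subset\widetilde\pi(Z_{\bp})$ where $R=\varepsilon^s R_0$ with $\varepsilon\nmid R_0$, and this is immediate from $V(R)=\widetilde\pi(\widetilde X_{\bp})$ together with $\widetilde\pi(Y_{\bp})\subset\{\varepsilon=0\}$ and the fact that a codimension-one component of $V(R_0)$ lying in $\{\varepsilon=0\}$ would force $\varepsilon\mid R_0$. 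This is exactly the content of (one direction of) Lemma~\ref{lem:Xp}, which you could simply cite.
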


Furthermore, we provide a complete description of the zeroes of $\PRes_{\bx}(f_1, f_2)$ for the simplest case $n = 2$ and $m = 1$.

\begin{Proposition}\label{prop:planar}
    Consider $f_1, f_2 \in \KK[x_1, x_2, y]$ homogeneous with respect to $x_1, x_2$.
    We write $f_1 = g h_1$ and $f_2 = g h_2$, where $g = \operatorname{GCD}(f_1, f_2)$.
    Then $\PRes_{\bx}(f_1, f_2)$ vanishes at $y^\ast \in \overline{\KK}$ if an only if:
    \begin{itemize}
        \item $y^\ast$ is a $y$-coordinate of a solution of $h_1 = h_2 = 0$;
        \item $g$ is divisible by $y - y^\ast$ \Ra{over $\overline{\KK}$}.
    \end{itemize}
\end{Proposition}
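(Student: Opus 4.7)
The plan is to derive an explicit product formula for $\CRes_{\bx,\bp}(f_1,f_2)$ in which the two alternatives of the proposition appear as separate factors, and then analyze each factor under generic $\bp$.

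I would start from the syzygy-type identity $h_1\widetilde{f}_2 - h_2\widetilde{f}_1 = -\varepsilon\,q$ with $q := h_2 p_1 - h_1 p_2$ (which follows from $h_1 f_2 = h_2 f_1$). Combining multiplicativity
$\Res_\bx(\widetilde{f}_1,h_1\widetilde{f}_2) = \Res_\bx(\widetilde{f}_1,h_1)\,\Res_\bx(\widetilde{f}_1,\widetilde{f}_2)$
with the invariance $\Res_\bx(F, G+\alpha F) = \Res_\bx(F,G)$ at preserved degrees, the identity rewrites the left-hand side as $\Res_\bx(\widetilde{f}_1, -\varepsilon q) = (-\varepsilon)^{d_1}\,\Res_\bx(\widetilde{f}_1,q)$. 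Reducing $\widetilde{f}_1 \bmod h_1$ yields $\Res_\bx(\widetilde{f}_1,h_1) = \varepsilon^{c_1}\Res_\bx(p_1,h_1)$ where $c_i := \deg_\bx h_i$, and reducing $q \bmod h_1$ together with multiplicativity yields $\Res_\bx(h_1,q) = \pm\Res_\bx(h_1,h_2)\,\Res_\bx(p_1,h_1)$. Substituting $\Res_\bx(\widetilde{f}_1,q) = \Res_\bx(g,q)\Res_\bx(h_1,q) + O(\varepsilon)$ and cancelling the common factor $\Res_\bx(p_1,h_1)$, extraction of the lowest $\varepsilon$-coefficient gives the key formula
\[
    \CRes_{\bx,\bp}(f_1,f_2) \;=\; c \cdot \Res_\bx(h_1,h_2) \cdot \Res_\bx\bigl(g,\;h_2 p_1 - h_1 p_2\bigr)
\]
for some nonzero sign $c\in\KK$ and every $\bp$ for which the right-hand side is nonzero as a polynomial in $y$ (a Zariski-open condition).

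Now I would analyze each factor. The factor $\Res_\bx(h_1,h_2)$ does not depend on $\bp$; since $\gcd(h_1,h_2)=1$, it is a nonzero polynomial in $y$ whose zeros are exactly the $y$-coordinates of common solutions of $h_1=h_2=0$, matching the first alternative. For the second factor $\Res_\bx(g,q)$ with $q = h_2 p_1 - h_1 p_2$, vanishing at $y^\ast$ is equivalent to either $g|_{y=y^\ast}\equiv 0$ in $\bx$---equivalently $(y-y^\ast)\mid g$, the second alternative---or some root $\xi\in\PP^1$ of $g|_{y=y^\ast}$ satisfying $q(\xi,y^\ast)=0$. The latter condition, $h_2(\xi,y^\ast)p_1(\xi) = h_1(\xi,y^\ast)p_2(\xi)$, is linear in the coefficients of $\bp$: it is identically satisfied in $\bp$ precisely when $h_1(\xi,y^\ast)=h_2(\xi,y^\ast)=0$---forcing $y^\ast$ into the first alternative---and otherwise cuts out a proper hyperplane avoided by generic $\bp$. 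Combining: under either alternative of the proposition, one of the two factors vanishes at $y^\ast$ for every admissible $\bp$, so $\PRes_\bx(f_1,f_2)$ vanishes at $y^\ast$. Conversely, if neither alternative holds, then $\Res_\bx(h_1,h_2)(y^\ast) \neq 0$ and, for $\bp$ outside finitely many hyperplanes, $\Res_\bx(g,q)(y^\ast)\neq 0$; since $\PRes_\bx(f_1,f_2)$ divides $\CRes_{\bx,\bp}(f_1,f_2)$ for generic $\bp$ by Lemma~\ref{lem:double_pert}, this forces $\PRes_\bx(f_1,f_2)(y^\ast)\neq 0$.

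The main obstacle I foresee is the careful tracking of degree conditions throughout the derivation of the key formula: every application of the invariance lemma requires degree preservation, and the mod-$h_1$ reductions must be re-verified in edge cases such as $c_1 = 0$ (i.e., $h_1 \in \KK[y]$) or $\deg_\bx g = 0$. A secondary technical point is ensuring that $\Res_\bx(g,q)\Res_\bx(h_1,h_2)$ is not identically zero in $y$ for $\bp$ in a Zariski-open set, so that it genuinely equals $\CRes_{\bx,\bp}(f_1,f_2)$ rather than a strictly lower-order artefact of the $\varepsilon$-expansion.
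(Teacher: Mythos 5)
Your proposal is correct and takes a genuinely different route from the paper. The paper's own proof of this proposition goes through its general machinery: it invokes Lemma~\ref{lem:Xp} (which identifies $\pi(X_{\bp})$ with $V(\CRes_{\bx,\bp})$) and Lemma~\ref{lem:syzygy} (which identifies $X_{\bp}$ with $V(f_1,f_2,S_{\bp})$ for generic $\bp$), observes that the syzygy module is generated by $(h_2,-h_1)$ so $S_{\bp}$ is principal with generator $s_{\bp} = p_1 h_2 - p_2 h_1$, and then does a three-way geometric case analysis on $V(f_1,f_2,s_{\bp}) \cap \{y = y^\ast\}$. You instead bypass the geometric lemmas entirely and derive a closed-form product factorization
$\CRes_{\bx,\bp}(f_1,f_2) = c\,\Res_\bx(h_1,h_2)\,\Res_\bx(g, h_2 p_1 - h_1 p_2)$
by pure resultant algebra (multiplicativity, invariance under adding polynomial multiples, degree-preserving reduction mod $h_1$), and then read off the two alternatives from the two factors. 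The central polynomial $q = h_2 p_1 - h_1 p_2$ in your derivation is exactly the paper's $s_{\bp}$, so both arguments rest on the same underlying syzygy; the difference is in execution. Your route is more self-contained and yields strictly more information (an explicit formula for the generalized characteristic polynomial, not just its zero set), at the cost of the sign and degree bookkeeping that you correctly flag as a technical burden. The paper's route is less explicit but is the natural specialization of the framework it builds for the general Theorem~\ref{thm:main}, which is presumably why the author chose it. The one point worth spelling out more carefully in your write-up is the Zariski-open condition that $\Res_\bx(g,q)$ is a nonzero polynomial in $y$ for generic $\bp$: this needs the observation (stated explicitly in the paper's proof) that for every irreducible divisor $g_0$ of $g$, at least one of $h_1,h_2$ is coprime to $g_0$, so that $q$ does not vanish identically on $V(g_0)$ for generic $\bp$. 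Without this, the lowest-order $\varepsilon$-coefficient you extract could be a strictly lower-order artefact rather than the genuine $\CRes$, as you yourself caution.
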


\section{Examples}\label{sec:examples}

The computations for the examples below were carried out using the {\sc Resultants} package by~\cite{Staglian2018} in {\sc Macaulay2}~\citep{M2}. 
\Ra{The package computes multivariate resultant by induction in the number of variabes using the Poisson product formula~\cite[Section~1]{Staglian2018}.}

\begin{Example}[Proper embedded component]\label{ex:embedded}
    Let $n = 2$ and $m = 1$, and take 
    \[
    f_1 = x_1 (x_1 + x_2 y), \quad f_2 = x_1 (x_1 - x_2 y).
    \]
    The zero set $V(f_1, f_2)$ is the line $x_1 = 0$ but the ideal generated by $f_1, f_2$ has an embedded component (double point) at the origin (i.e., point $([0:1], 0)$).
    As described in Remark~\ref{rem:computation}, we compute that $\PRes_{\bx}(f_1, f_2) = y^2$, so the projection of the embedded component is in the zero set of the perturbed resultant.
    This agrees with Proposition~\ref{prop:planar}.
\end{Example}

\begin{Example}[Excess component with non-dominant projection]\label{ex:positive}
    Let $n = 2$, $m = 1$, and $f_1 = f_2 = x_1 y$.
    In this case the variety $V(f_1, f_2)$ will consist of two one-dimensional components $x_1 = 0$ and $y = 0$. 
    However, note that the latter has zero-dimensional image under the projection to the $y$-component.
    As described in Remark~\ref{rem:computation}, we compute $\PRes_{\bx}(f_1, f_2) = y$, and see that the projection of the excess component $y = 0$ is indeed in the zero set of $\PRes$.
    This agrees with Proposition~\ref{prop:planar}.
\end{Example}

\Ra{
\begin{Example}[Excess component with non-dominant projection]\label{ex:positive2}
    Consider the case $n = 3$ and $m = 1$ and take
    \[
    f_1 = yx_1 + x_2 + x_3,\quad f_2 = x_1 + yx_2 + x_3, \quad f_3 = x_1 + x_2 + y x_3.
    \]
    The zero set $V(f_1, f_2, f_3)$ consists of two components: a point $([1:1:1], -2)$ and a line $([a:b:-a-b], 1)$.
    Computing the perturbed resultant, we get
    \[
    \PRes_{\bx}(f_1, f_2, f_3) = (y - 1)^2(y + 2).
    \]
    Therefore, the excess component (the line) with non-dominant projection contributes to the zero set of $\PRes$.
\end{Example}}

\begin{Example}[Singularity, persistent]\label{ex:twisted}
    Let $n = 3$ and $m = 1$.
    We start with a rational curve in $\AAA^3$ parametrized by $s \to (s^3, s^4, s^5)$ in the coordinates $(y, x_1, x_2)$.
    Then we homogenize this curve with respect to $x_1, x_2$, and obtain a curve in $\PP^2\times \AAA$ defined by:
    \begin{equation}\label{eq:twisted}
    f_1 = x_3^2 y^3 - x_1 x_2,\quad f_2 = x_1^2 - x_2 x_3 y, \quad f_3 = x_2^2 - x_1 x_3 y^2.
    \end{equation}
    These three polynomials define a prime ideal of a curve which projects dominantly on the $y$-coordinate.
    However, this curve has a singularity at the origin (i.e., the point $([0:0:1], 0)$) where it is not a locally complete intersection.
    We compute $\PRes_{\bx}(f_1, f_2, f_3)$ as described in Remark~\ref{rem:computation}, and obtain 
    \[
    \PRes_{\bx}(f_1, f_2, f_3) = y.
    \]
    Note that the fact that $\PRes_\bx(f_1, f_2, f_3)$ may only vanish at $y = 0$ follows from Theorem~\ref{thm:main}.
\end{Example}

\begin{Example}[Singularity, not persistent]\label{ex:sing_not_pers}
   Let us show that not all the singularities yield persistent components.
   We consider the case $n = 2, m = 1$ and
   \[
   f_1 = f_2 = x_2^2 y^3 - x_1^2.
   \]
   The variety $V(f_1, f_2)$ is a curve with a singularity at $([0:1], 0)$.
   However, direct computation, as described in Remark~\ref{rem:computation}, shows that $\PRes_{\bx}(f_1, f_2) = 1$ (this could also be deduced from Proposition~\ref{prop:planar}), so the singularity is not reflected in the perturbed resultant.
   We think that the main difference between this example and Example~\ref{ex:twisted} is that in the latter, the curve is not a locally complete intersection at the singularity. This property forces all the syzygies of~\eqref{eq:twisted} to vanish at the singular point \Rb{(see Remark~\ref{rem:Sp} for further details)}.
\end{Example}

\Ra{
\begin{Example}[Singularity, not persistent]
    The previous example featured a curve with a cusp singularity, here we will consider a space curve with a self-intersection.
    We take $n = 3, m = 1$ and
    \[
    f_1 = x_1^2 + x_2^2 - x_3^2(1 - y^2), \quad f_2 = x_1^2 + x_2^2 - x_1 x_3, \quad f_3 = f_1 + f_2.
    \]
    These equations define a curve with a self-intersection at $([1:0:1], 0)$ but this point does not make a contribution to the zero set of the perturbed resultant: direct computation shows that $\PRes_{\bx}(f_1, f_2, f_3) = 1$.
\end{Example}}


\section{Proofs}\label{sec:proofs}
In this section all the varieties will be considered over the algebraic closure $\overline{\KK}$ of $\KK$ and $\PP$ and $\AAA$ will always mean $\PP_{\overline{\KK}}$ and $\AAA_{\overline{\KK}}$.
Throughout the section, we fix 
\begin{itemize}
    \item two sets of variables $\bx = (x_1, \ldots, x_n)$ and $\by = (y_1, \ldots, y_m)$;
    \item subalgebra $R \subset \KK[\bx, \by]$ of polynomials homogeneous {\Ra{with respect to}} $\bx$;
    \item the canonical projection $\pi\colon \PP^{n - 1}\times \AAA^m \to \AAA^m$ induced by the embedding $\KK[\by] \subset R$;
    \item polynomials $f_1, \ldots, f_n \in R$ with degrees $d_i := \deg_{\bx} f_i$, and we will assume that $d_1, \ldots, d_n$ are positive;
    \item space $\Ra{\mathcal{P}}$ of all vectors $\bp \in (\KK[\bx])^n$ of homogeneous polynomials with $\deg_\bx p_i = d_i$ for $1 \leqslant i \leqslant n$.
\end{itemize}
Our study of the perturbed resultant $\PRes_{\bx}(f_1, \ldots, f_n)$ will proceed via perturbing the whole variety $X := V(f_1, \ldots, f_n)$ defined by $f_1, \ldots, f_n$.
More precisely, let $\bp \in \Ra{\mathcal{P}}$ be an admissible vector of homogeneous polynomials.
As in Definition~\ref{def:canny}, we set $f_{i, \bp} := f_i + \varepsilon p_i$ for $1 \leqslant i \leqslant n$.
Consider the variety 
\[
\widetilde{X}_{\bp} := V(f_{1, \bp}, \ldots, f_{n, \bp}) \subset \PP^{n - 1} \times \AAA^m \times \AAA.
\]
It has a decomposition $\widetilde{X}_{\bp} = Y_{\bp} \cup Z_{\bp}$, where $Y_{\bp}$ is the union of all components of $\widetilde{X}_{\bp}$ contained in $\{ \varepsilon = 0\}$, 
and $Z_{\bp}$ is the union of the remaining components.
We define the perturbed variety $X_{\bp} \subset \PP^{n - 1} \times \AAA^m$ as $Z_{\bp} \cap \{ \varepsilon = 0 \}$.
The following lemma justifies the relevance of this construction to our problem.

\begin{Lemma}\label{lem:Xp}	
    For every $\bp \in \Ra{\mathcal{P}}$, $\pi(X_{\bp}) = V\left( \CRes_{\bx, \bp}(f_1, \ldots, f_n) \right)$.
\end{Lemma}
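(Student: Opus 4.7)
The plan is to work with the entire perturbed variety $\widetilde{X}_{\bp}$ rather than with the lowest-order coefficient directly, and to recover Lemma~\ref{lem:Xp} only at the very end by restricting to $\{\varepsilon=0\}$. I would introduce the projection $\Pi := \pi\times\operatorname{id}_{\AAA}\colon \PP^{n-1}\times\AAA^m\times\AAA \to \AAA^m\times\AAA$ and set $R(\by,\varepsilon) := \Res_{\bx}(f_{1,\bp},\ldots,f_{n,\bp})$. The first ingredient is the defining property of the resultant, giving the set-theoretic equality $V(R)=\Pi(\widetilde{X}_{\bp})$ in $\AAA^m\times\AAA$; properness of $\PP^{n-1}$ makes $\Pi$ a closed map, so the images $\Pi(Y_{\bp})$ and $\Pi(Z_{\bp})$ are automatically closed.

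Next I would factor $R = \varepsilon^s R'$ with $R'\in\KK[\by,\varepsilon]$ coprime to $\varepsilon$; by construction $R'(\by,0) = R_s(\by) = \CRes_{\bx,\bp}(f_1,\ldots,f_n)$. The core of the argument is the intermediate identity
\[
V(R') \;=\; \Pi(Z_{\bp}) \quad \text{in } \AAA^m\times\AAA.
\]
For $\supseteq$, I fix an irreducible component $W$ of $Z_{\bp}$: by construction $W\not\subset\{\varepsilon=0\}$, so $\varepsilon$ does not vanish identically on the irreducible closed set $\Pi(W)$; since $\Pi(W)\subset V(R)=V(\varepsilon)\cup V(R')$ and $\Pi(W)\not\subset V(\varepsilon)$, irreducibility forces $\Pi(W)\subset V(R')$. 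For $\subseteq$, each irreducible component $V(G_i)$ of $V(R')$ corresponds to an irreducible factor $G_i\neq\varepsilon$, so $V(G_i)\not\subset V(\varepsilon)$; since $V(G_i)\subset V(R)=\Pi(Y_{\bp})\cup\Pi(Z_{\bp})$ with $\Pi(Y_{\bp})\subset V(\varepsilon)$, irreducibility again pins $V(G_i)$ inside $\Pi(Z_{\bp})$.

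To conclude, I would intersect both sides with $\{\varepsilon=0\}$. The left side becomes $V(R'(\by,0)) = V(\CRes_{\bx,\bp}(f_1,\ldots,f_n))$. For the right side, since $\Pi$ preserves the $\varepsilon$-coordinate, $\Pi(Z_{\bp})\cap\{\varepsilon=0\} = \Pi(Z_{\bp}\cap\{\varepsilon=0\}) = \Pi(X_{\bp})$, which under the identification $\AAA^m\times\{0\}\cong\AAA^m$ is exactly $\pi(X_{\bp})$. The main technical point I expect is the intermediate identity $V(R')=\Pi(Z_{\bp})$: the delicate part is making sure no component gets lost or gained under $\Pi$, and this is where both the irreducibility of the components of $Z_{\bp}$ and the fact that $\Pi$ is a closed map (thanks to properness of $\PP^{n-1}$) play essential roles.
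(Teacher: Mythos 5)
Your proof is correct, and it takes a genuinely different route from the paper. The paper's argument is ideal-theoretic: it identifies $I(Z_\bp)$ with the saturation $I(\widetilde{X}_\bp):\varepsilon^\infty$, uses the commutation of saturation with elimination and of projection with intersection, and finishes with the explicit algebraic observation that $(P:\varepsilon^\infty)+(\varepsilon)$ is generated by $\varepsilon$ and the lowest-order $\varepsilon$-coefficient of $P$. Your argument is instead geometric: you establish the intermediate set-theoretic identity $V(R')=\Pi(Z_\bp)$ by working with irreducible components, using that an irreducible closed set contained in a union of two closed sets must lie in one of them, plus the closedness of $\Pi$ (properness of $\PP^{n-1}$), and then intersect with $\{\varepsilon=0\}$. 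The geometric route avoids the ideal manipulations entirely and makes the "no component gained or lost" phenomenon explicit; the paper's route is more succinct and localizes the content into the single algebraic fact about $(P:\varepsilon^\infty)+(\varepsilon)$. Both are valid; which one is preferable is largely a matter of taste.
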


\begin{proof}
	Let $\widetilde{\pi} \colon \PP^{n - 1} \times \AAA^m \times \AAA  \to \AAA^m \times \AAA$ be the canonical projection.
	From the general property of the resultant, we know that 
    \[
    \widetilde{\pi} \left( \widetilde{X}_{\bp} \right) = V\left( \Res_{\mathbf{x}} (f_{1, \bp}, \ldots, f_{n, \bp}) \right).
    \]
    \Rb{Furthermore, we have $\widetilde{\pi}(\widetilde{X}_{\bp}) = \widetilde{\pi}(Y_\bp \cup Z_{\bp}) = \widetilde{\pi}(Y_{\bp}) \cup \widetilde{\pi}(Z_{\bp})$.
    We factorize
    \[
    \Res_{\bx}(f_{1, \bp}, \ldots, f_{n, \bp}) = \varepsilon^s (\CRes_{\bx, \bp}(f_1, \ldots, f_n) + \mathrm{o}(\varepsilon)).
    \]
    Since $\widetilde{\pi}(Y_{\bp}) \subset V(\varepsilon)$ and none of the components of $\widetilde{\pi}(Z_{\bp})$ belongs to $V(\varepsilon)$, we have
    \[
    \widetilde{\pi}(Z_{\bp}) = V(\CRes_{\bx, \bp}(f_1, \ldots, f_n) + \mathrm{o}(\varepsilon)).
    \]
    Since $\widetilde{\pi}(X_{\bp}) = \widetilde{\pi}(Z_{\bp} \cap \{\varepsilon = 0\}) = \widetilde{\pi}(Z_{\bp}) \cap \{\varepsilon = 0\}$, we have
    \[
    \widetilde{\pi}(X_{\bp}) = V(\varepsilon, \CRes_{\bx, \bp}(f_1, \ldots, f_n) + \mathrm{o}(\varepsilon)) = V(\varepsilon, \CRes_{\bx, \bp}(f_1, \ldots, f_n)).
    \]
    Therefore, $\pi(X_{\bp}) = V(\CRes_{\bx, \bp}(f_1, \ldots, f_n))$.}
\end{proof}

Now we can concentrate on studying $X_{\bp}$.
For any admissible $\bp \in (\KK[\bx])^n$, we define
\[
S_{\bp} := \{ g_1p_1 + \ldots + g_n p_n \mid (g_1, \ldots, g_n) \in \operatorname{Syz}(f_1, \ldots, f_n) \},
\]
where $\operatorname{Syz}(f_1, \ldots, f_n)$ denotes the module of syzygies of $f_1, \ldots, f_n$.

\begin{Lemma}\label{lem:syzygy}
    In the notation above, $X_{\bp} \subset V(f_1, \ldots, f_n, S_{\bp})$.
    Furthermore, if none of the components of $V(S_{\bp})$ is contained in $V(f_1, \ldots, f_n)$, then $X_{\bp} = V(f_1, \ldots, f_n, S_{\bp})$.
\end{Lemma}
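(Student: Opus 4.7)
The plan is to prove the two inclusions separately. For $X_\bp \subset V(f_1, \ldots, f_n, S_\bp)$, I would use the tautological identity
\[
\sum_{i=1}^n g_i(f_i + \varepsilon p_i) \;=\; \varepsilon \sum_{i=1}^n g_i p_i
\]
valid for every syzygy $(g_1, \ldots, g_n) \in \operatorname{Syz}(f_1, \ldots, f_n)$. The left-hand side vanishes on $\widetilde{X}_\bp$, so $\varepsilon \cdot (g_1 p_1 + \cdots + g_n p_n)$ vanishes on $Z_\bp$; since no component of $Z_\bp$ is contained in $\{\varepsilon = 0\}$, the function $\varepsilon$ is a nonzerodivisor on its coordinate ring, and restricting to $\{\varepsilon = 0\}$ shows that every element of $S_\bp$ vanishes on $X_\bp$. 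Together with the trivial vanishing of the $f_i$ on $X_\bp \subset X$, this gives the first inclusion.

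For the reverse inclusion under the hypothesis, I would work one component $C$ of $V(S_\bp)$ at a time, with generic point $\eta$. Because $C \not\subset X$, some $f_i(\eta) \neq 0$, so after localizing at the prime $\mathfrak{p}_\eta$ the short exact sequence $0 \to \operatorname{Syz}(f_1,\ldots,f_n) \to R^n \to (f_1, \ldots, f_n)R \to 0$ becomes a split surjection onto a free rank-one module. Tensoring with the residue field $\kappa(C)$ identifies the $\kappa(C)$-span of the values of syzygies at $\eta$ with the whole hyperplane $\{v : \sum v_i f_i(\eta) = 0\}$. Since $\eta \in V(S_\bp)$, the vector $(p_i(\eta))$ pairs trivially with every element of this hyperplane, forcing $(p_i(\eta)) = \lambda(\eta)\,(f_i(\eta))$ for some $\lambda \in \kappa(C)$; equivalently, $p_i f_j = p_j f_i$ identically on $C$ for all $i,j$.

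This proportionality lets me define a morphism $\varepsilon \colon C \to \AAA^1$ by $\varepsilon(\eta') := -f_i(\eta')/p_i(\eta')$ using any index $i$ with $p_i(\eta') \neq 0$; admissibility of $\bp$ guarantees such an $i$ exists at every point of $C$, and the relation $p_i f_j = p_j f_i$ makes the value independent of the choice of $i$. Its graph $\Gamma \subset C \times \AAA^1$ is irreducible, lies in $\widetilde{X}_\bp$ by construction, and is not contained in $\{\varepsilon = 0\}$ (otherwise every $f_j$ would vanish on $C$, contradicting $C \not\subset X$), hence $\Gamma \subset Z_\bp$. A short check using $p_i f_j = p_j f_i$ shows that $\varepsilon(\eta') = 0$ forces $f_j(\eta') = 0$ for every $j$, so $\varepsilon^{-1}(0) = X \cap C$; therefore any $\xi^\ast \in V(f_1, \ldots, f_n, S_\bp) = \bigcup_C (X \cap C)$ yields a point $(\xi^\ast, 0) \in \Gamma \cap \{\varepsilon = 0\} \subset X_\bp$, finishing the second inclusion.

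The delicate step I expect to spend the most time on is justifying that the syzygy values actually span the full hyperplane at $\eta$; the hypothesis $C \not\subset X$ is used in exactly this place to ensure $(f_1,\ldots,f_n)R_{\mathfrak{p}_\eta} = R_{\mathfrak{p}_\eta}$ and hence to split the localized sequence, and the same hypothesis re-emerges at the end to guarantee that the graph $\Gamma$ does not collapse into $\{\varepsilon = 0\}$.
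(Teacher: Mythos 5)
Your proof is correct and follows the same overall strategy as the paper: use the identity $\sum g_i f_{i,\bp} = \varepsilon \sum g_i p_i$ for syzygies to get the forward inclusion, and for the reverse inclusion lift a locus inside a component of $V(S_\bp)$ to $\widetilde{X}_\bp$ via the essentially unique $\varepsilon$-value forced by the proportionality of $(f_i)$ and $(p_i)$, then specialize to $\varepsilon = 0$. The differences are in the packaging. For the reverse inclusion, the paper fixes the point $P$, picks an auxiliary irreducible curve $\mathcal{C}$ through $P$ inside the component $Y$ with $\mathcal{C} \not\subset V(f_1)$, and lifts only that curve; you instead construct a regular morphism $\varepsilon \colon C \to \AAA^1$ on the entire component $C$ and take its graph $\Gamma$, which avoids the curve-selection step and is arguably a little cleaner. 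On the other hand, to establish the proportionality $(p_i(\eta)) = \lambda\,(f_i(\eta))$ you invoke the splitting of the localized syzygy exact sequence and the fact that syzygy values at the generic point span the full hyperplane $\{\sum v_i f_i(\eta) = 0\}$; the paper gets the same identities $p_i f_j = p_j f_i$ directly from the Koszul syzygies $f_j e_i - f_i e_j$, which is more elementary and suffices. Both of your inclusions are justified correctly, including the use of admissibility to ensure $\varepsilon$ is globally defined and the use of $C \not\subset X$ in two places (splitting the sequence, and keeping $\Gamma$ out of $\{\varepsilon = 0\}$), exactly as the paper uses it.
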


\begin{Remark}\label{rem:Sp}
    We did not find an example in which a component of $V(S_{\bp})$ would be contained in $V(f_1, \ldots, f_n)$ for generic $\bp$, so we conjecture that this never happens.
    Being established, this would allow to strengthen our main result \Rb{as follows}.
    \Rb{At generic $\bp$, we would have $X_{\bp} = V(f_1, \ldots, f_n, S_{\bp})$, so taking sufficiently many generic $\bp$'s, $X_{\bp_1}\cap \ldots \cap X_{\bp_\ell}$ will be exactly the locus of $V(f_1, \ldots, f_n)$ where all the components of all syzygies vanish.
    This would allow to rule out some of the singularities (for example, as in Example~\ref{ex:sing_not_pers}).}
\end{Remark}

\begin{proof}[Proof of Lemma~\ref{lem:syzygy}]
    Throughout the proof, for a point $P \in \PP^{n - 1} \times \AAA^m$ and a scalar $\alpha \in \KK$, we will denote by $(P, \alpha)$ the point $\PP^{n - 1} \times \AAA^m \times \AAA$ with the $\varepsilon$-coordinate equal to $\alpha$.

    By the construction of $X_\bp$, we have $X_{\bp} \subset V(f_1, \ldots, f_n)$.
    We will now show that $X_{\bp} \subset V(S_{\bp})$.
    Consider a point $P \in X_{\bp}$.
    By definition of $X_{\bp}$ and $Z_{\bp}$, there exists an algebraic curve $\mathcal{C} \subset Z_{\bp}$ such that $P \in \mathcal{C}$ and $\mathcal{C}$ does not belong to $\{\varepsilon = 0\}$.
    Let $(g_1, \ldots, g_n)$ be any syzygy of $(f_1, \ldots, f_n)$.
    Then
    \[
    I(Z_{\bp}) \ni g_1 f_{1, {\bp}} + \ldots + g_n f_{n, \bp} = \varepsilon(g_1 p_1 + \ldots + g_n p_n).
    \]
    Since $\varepsilon$ is not identically zero on $\mathcal{C}$, the polynomial $g_1 p_1 + \ldots + g_n p_n$ vanishes on $\mathcal{C}$ and, thus, it vanishes at $P$.
    Therefore, $X_{\bp} \subset V(f_1, \ldots, f_n, S_{\bp})$.

    Now we will prove the reverse inclusion $X_{\bp} \supset V(f_1, \ldots, f_n, S_{\bp})$ for the case when none of the components of $V(S_{\bp})$ belongs to $V(f_1, \ldots, f_n)$.
    Consider any point $P \in V(f_1, \ldots, f_n, S_{\bp})$.
    It belongs to a component of $V(S_{\bp})$, we denote this component by $Y$.
    Since $Y$ does not belong to $V(f_1, \ldots, f_n)$, there exists $i$ such that $Y \not\subset V(f_i)$.
    Without loss of generality, we will assume $i = 1$.
    Therefore, there exists an irreducible curve $\mathcal{C} \subset Y$ such that $P \in \mathcal{C}$ and $\mathcal{C} \not\subset V(f_1)$.
    Due to the relation $f_j \cdot f_1 + (-f_1) \cdot f_j = 0$, $S_\bp$ contains a polynomial $f_j p_1 - f_1 p_j$ for every $2 \leqslant j \leqslant n$.
    Therefore, 
    \[
    (f_1(Q), p_1(Q)),\; \ldots,\; (f_n(Q), p_n(Q))
    \]
    are linearly dependent for every $Q \in \mathcal{C}$.
    Since the system $\bp = \mathbf{0}$ does not have a solution in $\PP^{n - 1}$, at least one of these vectors is nonzero, so, for every $Q \in \mathcal{C}$, there exists a unique $\varepsilon_Q$ such that $(Q, \varepsilon_Q) \in \widetilde{X}_{\bp}$. 
    This provides us with a bijective lifting of $\mathcal{C}$ to a curve $\mathcal{C}_0 \subset \widetilde{X}_{\bp}$.
    Since $f_1$ does not vanish identically on $\mathcal{C}$, $\varepsilon_Q \neq 0$ on an open subset of $\mathcal{C}_0$, so, by the irreducibility of $\mathcal{C}$ and, thus, $\mathcal{C}_0$, we have $\mathcal{C}_0 \subset Z_{\bp}$.
    Since $(P, 0) \in \mathcal{C}_0$, we deduce that $P \in X_{\bp}$.
\end{proof}


\begin{Proposition}\label{prop:smooth}
    Consider a smooth point $P$ on a component $Y \subset V(f_1, \ldots, f_n)$ (considered as an affine scheme) of dimension $\geqslant m$.
    Then there is an open nonempty subset $U \subset \Ra{\mathcal{P}}$
    such that, for every $\bp \in U$, we have $P \not\in X_\bp$.
\end{Proposition}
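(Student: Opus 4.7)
The plan is to use Lemma~\ref{lem:syzygy} to reduce to a statement about the syzygy module: since $X_\bp \subset V(f_1, \ldots, f_n, S_\bp) \subset V(S_\bp)$, it suffices to exhibit, for $\bp$ in a Zariski open nonempty subset of $V$, a syzygy $(g_1, \ldots, g_n) \in \operatorname{Syz}(f_1, \ldots, f_n)$ such that $g_1 p_1 + \ldots + g_n p_n$ does not vanish at $P$. Writing $P = (\bx_0, \by_0) \in \PP^{n-1} \times \AAA^m$ with $\bx_0 \neq \mathbf{0}$, the task becomes to produce a syzygy with values $(g_i(P)) = (\lambda_1, \ldots, \lambda_n) \in \overline{\KK}^n$ (computed using a fixed representative $\bx_0$) for which $\sum_i \lambda_i p_i(\bx_0) \neq 0$.

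First I would analyze the local structure at $P$. Since $P$ is smooth on the component $Y$ of dimension $d \geq m$ inside the ambient $(n+m-1)$-dimensional variety $\PP^{n-1} \times \AAA^m$, the codimension is $r := (n+m-1) - d \leq n - 1$. In the regular local ring $\mathcal{O}_P$, the ideal $(f_1, \ldots, f_n)$ coincides with $(h_1, \ldots, h_r)$, where $h_1, \ldots, h_r$ is part of a regular system of parameters at $P$. In particular, $df_1|_P, \ldots, df_n|_P$ span an $r$-dimensional subspace of the cotangent space at $P$, so the left null space of the Jacobian of $(f_1, \ldots, f_n)$ in $\overline{\KK}^n$ has dimension at least $n - r \geq 1$; pick any nonzero vector $(\lambda_1, \ldots, \lambda_n)$ in this null space.

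The main step, which I expect to be the principal technical hurdle, is to realize $(\lambda_1, \ldots, \lambda_n)$ as the evaluation at $P$ of a genuine polynomial syzygy. The relation $\sum_i \lambda_i df_i|_P = 0$ places $\sum_i \lambda_i f_i$ in $\mathfrak{m}_P^2$. Using that, for the complete intersection $(f_1, \ldots, f_n) = (h_1, \ldots, h_r)$ with $h_j$ part of a regular system of parameters, one has $\mathfrak{m}_P^2 \cap (f_1, \ldots, f_n) = \mathfrak{m}_P \cdot (f_1, \ldots, f_n)$, one may write $\sum_i \lambda_i f_i = \sum_i \mu_i f_i$ with $\mu_i \in \mathfrak{m}_P$. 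Then $(\lambda_i - \mu_i)_{i=1}^n$ is a \emph{local} syzygy of $(f_1, \ldots, f_n)$ whose value at $P$ is $(\lambda_1, \ldots, \lambda_n)$. Since the polynomial syzygy module commutes with localization for finitely presented modules, the $\KK$-span of the evaluations at $P$ of polynomial syzygies coincides with the $\KK$-span of the evaluations of local syzygies, and hence contains $(\lambda_1, \ldots, \lambda_n)$; by $\KK$-linearity of evaluation, some polynomial syzygy $(g_1, \ldots, g_n)$ itself evaluates to $(\lambda_1, \ldots, \lambda_n)$ at $P$ (the bookkeeping between $\bx$-graded syzygies and local ones is handled by passing to an affine chart of $\PP^{n-1}$ containing $\bx_0$ and then rehomogenizing, using Euler's identity for the column corresponding to the chart variable).

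To finish, since $\bx_0 \neq \mathbf{0}$ the evaluation map $V \to \overline{\KK}^n$, $\bp \mapsto (p_1(\bx_0), \ldots, p_n(\bx_0))$, is surjective, so $\{\bp \in V : \sum_i \lambda_i p_i(\bx_0) \neq 0\}$ is a nonempty Zariski open subset of $V$. Intersecting with the (nonempty and open) admissibility locus yields the desired $U$. For any $\bp \in U$, the element $\sum_i g_i p_i \in S_\bp$ takes the nonzero value $\sum_i \lambda_i p_i(\bx_0)$ at $P$, so $P \notin V(S_\bp)$ and therefore, by Lemma~\ref{lem:syzygy}, $P \notin X_\bp$, as required.
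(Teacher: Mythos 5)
Your proof is correct and follows the same overall strategy as the paper: reduce via Lemma~\ref{lem:syzygy} to exhibiting a syzygy $(g_1,\ldots,g_n)$ with $\sum g_i(P)p_i(P)\neq 0$, use smoothness to work in the regular local ring at $P$, and conclude by an openness/nonemptiness argument on $V$. The difference lies in how the syzygy is produced. The paper's route is more direct: after noting $\langle f_1,\ldots,f_n\rangle\mathcal{O}_P=\langle f_1,\ldots,f_c\rangle\mathcal{O}_P$ with $c<n$, it writes $f_n=\sum_{i<n} a_i f_i$ and clears denominators, immediately obtaining a polynomial syzygy $(b_i)$ with $b_n(P)\neq 0$; nonemptiness of $U$ is then shown by an explicit choice of powers of linear forms. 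You instead pick a vector $\lambda$ in the left kernel of the Jacobian, lift it to a local syzygy via the fact $\mathfrak{m}_P^2\cap(f)=\mathfrak{m}_P\,(f)$ for ideals generated by part of a regular system of parameters, and then pass from local to polynomial syzygies by a localization/fiber argument; nonemptiness follows from surjectivity of the evaluation $\bp\mapsto\bp(\bx_0)$. Both are valid; your route requires a couple more facts (the $\mathfrak{m}^2$-intersection identity and commutation of syzygies with localization and fiber), and as you anticipated the local-to-graded bookkeeping is the fussy part, but the observation that clearing denominators only rescales the value at $P$ by a nonzero unit handles it. Your nonemptiness argument is arguably cleaner. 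One small note: invoking Euler's identity in the rehomogenization step is not really needed; the standard rehomogenization of a relation $\sum g_i f_i^{\mathrm{dehom}}=0$ over an affine chart already yields a graded syzygy after multiplying each $g_i$ by a suitable power of the chart variable, which does not vanish at $P$.
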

    
\begin{proof}
   Consider the local ring $(\mathcal{O}, \mathfrak{m})$ of the point $P$ considered as a point of the ambient space $\PP^{n - 1} \times \AAA^m$.
   We will denote the images of $f_1, \ldots, f_n$ in $\mathcal{O}$ by the same letters.
   Let $d := \dim Y$.
   The smoothness of $P$ implies that the Jacobian of $f_1, \ldots, f_n$ has rank $c := m + n - 1 - d$ at $P$. 
   By reordering $f$'s if necessary, we will further assume that the rank of the Jacobian of $f_1, \ldots, f_c$ is equal to $c$.
   Since $P$ is a smooth point, $\mathcal{O} / \langle f_1, \ldots, f_n \rangle \mathcal{O}$ is a regular local ring, so~\cite[proof of Proposition~1.10, Chapter VI]{kunz} implies that $\langle f_1, \ldots, f_n\rangle \mathcal{O} = \langle f_1, \ldots, f_c\rangle\mathcal{O}$.
   Since $d \geqslant m$, we have $c < n$, so we can write
   \[
     f_n = a_1 f_1 + \ldots + a_{n - 1}f_{n - 1},
   \]
   where $a_1, \ldots, a_{n - 1} \in \mathcal{O}$.
   By clearing denominators in this equality, we obtain a syzygy
   \[
     b_1 f_1 + \ldots + b_n f_n = 0,
   \]
   where $b_1, \ldots, b_n \in R$ and $b_n(P) \neq 0$.
   Then the condition
   \[
     b_1(P) p_1(P) + \ldots + b_n(P) p_n(P) \neq 0
   \]
   defines an open subset $U \subset \Ra{\mathcal{P}}$.
   In order to show that this subset is not empty, we choose linearly independent linear forms $\ell_1, \ldots, \ell_n$ such that $\ell_1(P) = \ldots = \ell_{n - 1}(P) = 0$ and $\ell_n(P) = 1$.
   Then $(\ell_1^{d_1}, \ldots, \ell_n^{d_n}) \in U$.
   By Lemma~\ref{lem:syzygy}, $P \not\in X_{\bp}$ for every $\bp \in U$.
\end{proof}

\begin{Proposition}\label{prop:main}
    Consider a point $P \in \AAA^m$ such that $\pi^{-1}(P)$ is finite and every point $Q \in \pi^{-1}(P)$ is a smooth point of an excess component of the scheme defined by $f_1, \ldots, f_n$.
    Then $\PRes_{\bx}(f_1, \ldots, f_n)$ does not vanish at $P$.
\end{Proposition}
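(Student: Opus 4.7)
The plan is to combine Proposition~\ref{prop:smooth}, Lemma~\ref{lem:Xp}, and Lemma~\ref{lem:double_pert} in a fiber-by-fiber genericity argument; the conceptual work has already been packaged into Proposition~\ref{prop:smooth}, so what remains is a short aggregation.

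First I apply Proposition~\ref{prop:smooth} at each point $Q$ of the fiber $\pi^{-1}(P) \cap X$, which is finite by hypothesis. Each such $Q$ is a smooth point of an excess component of $X$, and since $X$ is cut out by $n$ equations in $\PP^{n-1} \times \AAA^m$, the Hauptidealsatz guarantees that such a component has dimension at least $m$, matching the hypothesis of Proposition~\ref{prop:smooth}. This yields, for each $Q$, a nonempty Zariski-open $U_Q \subset V$ such that $Q \notin X_\bp$ for every $\bp \in U_Q$. Intersecting the finitely many $U_Q$'s produces a nonempty open subset $U_0 \subset V$ such that, for every $\bp \in U_0$, no point of $\pi^{-1}(P) \cap X$ lies in $X_\bp$.

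Next I transfer this to a statement about $\CRes$. By construction $X_\bp \subset V(f_1, \ldots, f_n) = X$, so $X_\bp \cap \pi^{-1}(P) = X_\bp \cap (\pi^{-1}(P) \cap X) = \emptyset$, i.e., $P \notin \pi(X_\bp)$. Lemma~\ref{lem:Xp} then gives $\CRes_{\bx, \bp}(f_1, \ldots, f_n)(P) \neq 0$ for every $\bp \in U_0$.

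To finish, I use Lemma~\ref{lem:double_pert}: it provides a nonempty open $U_1 \subset V \times V$ on which $\operatorname{GCD}(\CRes_{\bx, \bp}, \CRes_{\bx, \bq}) = \PRes_{\bx}(f_1, \ldots, f_n)$. Since $V$ is irreducible, the intersection $U_1 \cap (U_0 \times U_0)$ is nonempty, and picking any $(\bp, \bq)$ in it forces $\PRes_{\bx}(f_1, \ldots, f_n)$ to divide $\CRes_{\bx, \bp}(f_1, \ldots, f_n)$, which we have just shown is nonvanishing at $P$. Hence $\PRes_{\bx}(f_1, \ldots, f_n)(P) \neq 0$. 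The main obstacle in the whole chain was Proposition~\ref{prop:smooth}; at this stage everything is a routine bookkeeping argument about intersections of nonempty Zariski opens.
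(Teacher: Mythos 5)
Your proof is correct and follows essentially the same route as the paper's: apply Proposition~\ref{prop:smooth} at each of the finitely many fiber points to cut out a good open set of perturbations, then combine Lemma~\ref{lem:Xp} with the generic-GCD set from Lemma~\ref{lem:double_pert}. The only differences are cosmetic (your explicit remark that $X_\bp \subset X$, and your detour through the Hauptidealsatz to recover that excess components have dimension $\geqslant m$, which the paper treats as immediate from the definition).
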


\begin{proof}
    Let $U_0$ be the Zariski open subset of $\Ra{\mathcal{P}} \times \Ra{\mathcal{P}}$ from Lemma~\ref{lem:double_pert} and $U_1$ be the intersection of the open subsets of $\Ra{\mathcal{P}}$ provided by Proposition~\ref{prop:smooth} applied to the elements of $\pi^{-1}(P)$.
    Let $(\bp, \bq) \in (U_0 \cap (U_1 \times U_1))$.
    By Lemma~\ref{lem:Xp}, we have
    \[
    V(\PRes_\bx(\mathbf{f})) \subset V(\CRes_{\bx, \bp}(\mathbf{f})) \cap V(\CRes_{\bx, \bq}(\mathbf{f})) = \pi(X_{\bp}) \cap \pi(X_{\bq}).
    \]
    Since $\bp, \bq \in U_1$, the latter intersection does not contain $P$, so $\PRes_\bx(\mathbf{f})$ does not vanish on~$P$.
\end{proof}

\begin{proof}[Proof of Theorem~\ref{thm:main}]
    Follows directly from Proposition~\ref{prop:main}.
\end{proof}

\begin{proof}[Proof of Proposition~\ref{prop:planar}]
    We start with observing that the syzygy module of $f_1, f_2$ is generated by $(h_2, -h_1)$.
    Therefore, $S_{\bp}$ is generated by a single polynomial $s_{\bp} := p_1h_2 - p_2h_1$. 
    Since, for every irreducible divisor $g_0$ of $g$, at least one of $h_1$ and $h_2$ is not divisible by $g_0$,
    there exists an nonempty open subset $U_0 \subset \Ra{\mathcal{P}}$ such that $s_{\bp}$ is coprime with $g$ for $\bp \in U_0$.
    Then Lemma~\ref{lem:syzygy} implies that 
    \[
    X_{\bp} = V(f_1, f_2, s_{\bp})\quad \text{ for }\bp \in U_0.
    \]

    \noindent
    \Ra{\textbf{Case 1:}} \emph{Consider $y^\ast \in \overline{\KK}$ such that $g$ is divisible by $y - y^\ast$}.
    Then 
    \[
    V(f_1, f_2, s_{\bp}) \cap \{y = y^\ast\} = V(s_{\bp}) \cap \{y = y^\ast\}
    \]
    is always nonempty because $s_{\bp}(x_0, x_1, y^\ast)$ will always have a root in $\PP^1$.
    Consider $\bp \in U_0$.
    Then $X_{\bp} \cap \{y = y^\ast\}$ is nonempty, so, by Lemma~\ref{lem:Xp}, $\CRes_{\bx, \bp}(f_1, f_2)$ vanishes at $y^\ast$.
    Then the same is true for $\PRes_\bx(f_1, f_2)$. 

    \noindent
    \Ra{\textbf{Case 2:}} \emph{Consider $y^\ast \in \overline{\KK}$ such that there exist $\bx^\ast \in \PP^1$ with} 
    \[
    h_1(\bx^\ast, y^\ast) = h_2(\bx^\ast, y^\ast) = 0.
    \]
    Consider $\bp \in U_0$.
    Then $s_{\bp}$ vanishes at $(\bx^\ast, y^\ast)$, so $(\bx^\ast, y^\ast) \in X_{\bp}$.
    Then, by Lemma~\ref{lem:Xp}, $\CRes_{\bx, \bp}(f_1, f_2)$ vanishes at $y^\ast$.
    Then the same is true for  $\PRes_\bx(f_1, f_2)$ as well.

    \noindent
    \Ra{\textbf{Case 3:}} \emph{Finally, consider $y^\ast$ which does not fall into neither of the two cases above.}
    Let $\bx_1^\ast, \ldots, \bx_\ell^\ast \in \PP^1$ be the points such that $(\bx_i^\ast, y^\ast)$ is a solution of $f_1 = f_2 = 0$ for every $1 \leqslant i \leqslant \ell$.
    Since $g$ is not divisible by $y - y^\ast$, there are only finitely many such points.
    We take $\bp$ with undetermined coefficients and 
     consider the condition 
    \begin{equation}\label{eq:cond}
    \prod\limits_{i = 1}^\ell (p_1(\bx_i^\ast) h_1(\bx_i^\ast, y^\ast) - p_2(\bx_i^\ast)h_2(\bx_i^\ast, y^\ast)) \neq 0.
    \end{equation}
    By the assumption on $y^\ast$, none of the brackets in the product vanishes identically, so~\eqref{eq:cond} defines a nonempty open subset $U_1 \subset \Ra{\mathcal{P}}$.
    For every $\bp \in U_0 \cap U_1$, none of $(\bx_i^\ast, y^\ast)$ belongs to $X_{\bp}$, so, by Lemma~\ref{lem:Xp}, $\CRes_{\bx, \bp}(f_1, f_2)$ does not vanish on $y^\ast$. 
    Therefore, $\PRes_\bx(f_1, f_2)$ does not vanish on $y^\ast$ as well.
\end{proof}

\bibliographystyle{elsarticle-harv} 
\bibliography{bib}


\end{document}